\newtheorem{theorem}{Theorem}[section]
\newtheorem{lemma}[theorem]{Lemma}
\numberwithin{equation}{section}
\DeclareMathAlphabet{\pazocal}{OMS}{zplm}{m}{n}
\def\CC{\mathbb{C}}
\def\NN{\mathbb{N}}
\def\RR{\mathbb{R}}
\newcommand{\cB}{{\mathcal{B}}}
\newcommand{\cF}{{\mathcal{F}}}
\def\bk{\mbox{\boldmath $k$}}
\def\sbk{{\mbox{\scriptsize \boldmath $k$}}}
\newcommand{\bp}{{\mbox{\boldmath $p$}}}
\newcommand{\sbp}{{\mbox{\scriptsize \boldmath $p$}}}
\newcommand{\bx}{{\mbox{\boldmath $x$}}}
\newcommand{\sbx}{{\mbox{\scriptsize \boldmath $x$}}}
\newcommand{\by}{{\mbox{\boldmath $y$}}}
\newcommand{\sby}{{\mbox{\scriptsize \boldmath $y$}}}
\def\bO{\mbox{\boldmath $O$}}
\def\bP{\mbox{\boldmath $P$}}
\newcommand{\sbP}{{\mbox{\scriptsize \boldmath $P$}}}
\def\eg{{\it e.g.\ }}
\def\viz{{\it viz.}}
\def\ie{{\it i.e.\ }}
\def\be{\begin{equation}}
\def\ee{\end{equation}}    
\begin{document}

\title{Proper condensates and off-diagonal long range order}
\author{Detlev Buchholz \\[2mm]
Mathematisches Institut, Universit\"at G\"ottingen, \\
\small Bunsenstr.\ 3-5, 37073 G\"ottingen, Germany \\ 
\small detlev.buchholz@mathematik.uni-goettingen.de \\[5pt]
}
\date{}

\maketitle

\noindent \textbf{Abstract.} Within the framework of the algebra of
canonical commutation relations in Euclidean space,
a long range order between particles 
in bounded regions is established in states 
with a sufficiently large particle number. It    
occurs whenever homogeneous proper
(infinite) condensates form locally in
the states in the limit of infinite densities.    
The condensates are described by eigenstates
of the momentum operator, covering also those cases, where 
they are streaming with a constant velocity. The arguments
given are model independent and lead to a new criterion
for the occurrence of condensates. 
It makes use of a novel approach to the identification of
condensates, based on a characterization of regular and singular
wave functions.

\medskip  \noindent
\textbf{Keywords} \ Bose-Einstein condensation $\cdot$
Off-diagonal long range order

\medskip  \noindent
\textbf{Mathematics Subject Classification} \ 
81V73 $\cdot$ 82D05 $\cdot$ 46L60

\section{Introduction}
\label{sec1}
\setcounter{equation}{0} 

An important tool for the experimental detection of
Bose-Einstein condensates are interference measurements
on trapped gases. They are sensitive to the appearance 
of coherent configurations of particles which all have 
the same momentum, irrespective of their distance,
cf.\ \cite{KeDuSt}. On the
theoretical side, this phenomenon manifests itself in
the absence of decay properties of correlation functions,
generally referred to as off-diagonal long-range order (ODLRO). It
leads to peak values of the respective Fourier
transforms for these coincident momenta, cf.\
for example \cite{PaWa}.

\medskip
In this note we take a fresh look at this topic,
starting from a novel characterization of condensates \cite{Bu1}.
Instead of following the Onsager-Penrose approach, where one
characterizes condensates by the largest eigenvalues and
corresponding eigenfunctions of one-particle density matrices,
we focus on the spaces of \textit{regular} wave functions, 
which remain finitely occupied in the limit of infinite particle
numbers. Given any open, bounded region $\bO \subset \RR^d$, a  
proper (infinite) condensate in that region, 
appearing in the limit of infinite particle densities,  
is identified with the orthogonal complement of the resulting 
regular functions with support in $\bO$. The functions
in this orthogonal complement are said to be \textit{singular}.
This approach has the advantage that condensates, appearing in a finite
system, can be identified in a clear-cut manner, which does not depend
on its global shape or specific number of particles. 

\medskip
Making use of these notions, we will establish in 
Fock-states on the algebra of canonical commutation relations
the appearance of ODLRO in   bounded regions if the particle density 
is sufficiently high. This happens whenever homogeneous proper 
condensates are formed in the limit of infinite densities. It turns out that
these 
proper condensates can be described by eigenstates of the
momentum operator, possibly with a momentum different from zero.
Hence, our results cover also the situation where the
condensate propagates with a constant velocity.  

\medskip
The notation and concepts used in this paper, in particular the notion 
of proper condensates and their manifestations, are briefly recalled in Sec.~2.
Our main result, concerning the appearance of off-diagonal
long range order, is established in Sec.\ 3.
The paper concludes with a brief summary and a new criterion for
the occurrence of homogeneous proper condensates. In an appendix,
we provide examples, showing that such condensates
can appear under conditions which are weaker than the
Onsager-Penrose criterion for Bose-Einstein condensation. 

\section{Proper condensates}
\label{sec2}
\setcounter{equation}{0} 

Let $\cF = \bigoplus_n \cF_n$ be the bosonic Fock space, \viz, the direct
sum of $n$-particle spaces $\cF_n$ that are defined by the
$n$-fold symmetrized tensor products of the single particle
space $\cF_1 \doteq L^2(\RR^d)$, \mbox{$n \in \NN_0$}. We interpret the elements
\mbox{$f,g \in L^2(\RR^d)$} as single particle wave functions
on the $d$-dimensional
position space $\RR^d$. Their canonical scalar product is denoted 
by~$\langle f, g \rangle$. On $\cF$, annihilation operators
$a(f)$ and creation operators $a^*(g)$ are densely defined.
They are antilinear and linear in their entries, respectively, and satisfy
the commutation relations
\be
[a(f), a^*(g)] = \langle f, g \rangle \, 1 \, , \quad f,g \in L^2(\RR^d) \, ,
\ee
all other commutators being equal to $0$. 

\medskip
We will
consider sequences of states $\omega$, which can be represented by density
matrices $\rho$ on $\cF$,
\be
\omega(A) \doteq \text{Tr} \, \rho A \, , \quad A \in \cB(\cF) \, .
\ee
Since these sequences have limits which are no longer representable
in this manner, it is meaningful to restrict the states to suitable
subalgebras of the algebra~$\cB(\cF)$, \ie the algebra of bounded
operators on $\cF$. A standard choice is the Weyl algebra, another 
convenient choice is the resolvent algebra, invented in
\cite{BuGr} and used for the analysis of condensates in
\cite{Bu1, BaBu}. We need not delve into these 
issues here and can restrict our attention to the
so-called single particle density matrices, given by
\be
f,g \mapsto \omega(a^*(f)a(g)) \, , \quad f,g \in L^2(\RR^d) \, .
\ee
It is assumed in the following that
$\omega(a^*(f)a(f)) < \infty$ for all $f  \in L^2(\RR^d)$, which is
the case if the state $\omega$
contains a finite (mean) number of particles.

\medskip
We fix in the following an open, bounded region $\bO \subset \RR^d$, which
is thought of as being of macroscopic size, \ie big compared to typical
microscopic 
length scales. The subspace of functions with support in that region
is denoted by $L^2(\bO)$. Given any sequence of states, 
we define a corresponding regular subspace of $L^2(\bO)$ as
follows.

\medskip \noindent
\textbf{Definition I:}  Let $\omega_\sigma$, $\sigma > 0$, be a sequence
of states with properties described above. The corresponding
\textit{regular} subspace $R(\bO) \subset L^2(\bO)$ consists of all
functions $f \in  L^2(\bO)$ satisfying
\be
\limsup_{\sigma \rightarrow \infty} \, \omega_\sigma(a^*(f) a(f)) < \infty \, . 
\ee
Its complement $L^2(\bO) \backslash R(\bO)$ consists of \textit{singular}
wave functions, which are infinitely occupied in the limit. 

\medskip
There are many reasons why single particle wave functions
with support in a bounded region can be infinitely occupied
in the limit states, whence are singular.
For example, this may be due to high energy effects, as is the 
case for equilibrium states approaching infinite temperatures. There
all wave functions are non-regular in the limit. Yet these cases are of
no interest in the present context. We 
rely here on a more specific characterization of sequences of states
that eventually exhibit a proper condensate. In these states
there appear, besides clouds of
particles with a regular wave function, 
increasing numbers of particles which all 
occupy the same singular wave function $s$, cf.\ \cite{Bu1,BaBu}.

\medskip \noindent
\textbf{Definition II:} Let $\omega_\sigma$, $\sigma > 0$, be a sequence
of states with properties as in Definition I. The limit
of the sequence contains a proper (infinite) condensate in $\bO$
whenever $R(\bO)$ is closed
and has a one-dimensional orthogonal complement
in $L^2(\bO)$, consisting of the ray spanned by some singular function~$s$. 
This function characterizes 
a condensate that appears for sufficiently large values of $\sigma$. 

\medskip \noindent
\textbf{Remark:} In \cite{Bu1} the possibility was also discussed that 
the orthogonal complement of $R(\bO)$
has a finite dimension, different from one. 
We restrict our attention here to the preceding important case,  
known to appear in many models, cf.\ for example \cite{BaBu}. 
Relevant examples are also recalled in Sec.~4.

\medskip
Let $\omega_\sigma$, $\sigma > 0$, be a sequence
of states with the properties described in Definition II 
and let $s \in L^2(\bO)$ be the (up to a phase unique) normalized function
in the orthogonal complement of $R(\bO)$. Putting
\be
\sigma \mapsto n_C(\sigma) \doteq \omega_\sigma(a^*(s) a(s)) \, ,
\ee
this sequence is unbounded in the limit of large $\sigma$.
For the corresponding divergent  subsequences, it
defines the number of particles in $\bO$, forming a
proper condensate in the limit.

\begin{lemma}
  Let $\omega_\sigma$, $\sigma > 0$, be a sequence of states with properties
  specified in Definition II. The renormalized
  one-particle density matrices  
  \be
f,g \mapsto n_C(\sigma)^{-1} \omega_\sigma(a^*(f)a(g)) \,,
\quad f,g \in L^2(\bO) \, ,
\ee
converge for suitable subsequences of $\sigma$ in norm to the
one-dimensional projection onto the (normalized) singular
wave function~$s$, 
\be \label{e.2.7} 
\lim_\sigma  n_C(\sigma)^{-1} \omega_\sigma(a^*(f)a(g))    = 
\langle s, f \rangle \langle g, s \rangle   \, ,
\quad f,g \in L^2(\bO) \, .
\ee
\end{lemma}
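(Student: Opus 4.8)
The plan is to recast \eqref{e.2.7} as a norm-convergence statement for the single particle density matrix regarded as a bounded positive operator $T_\sigma$ on $L^2(\bO)$, \viz\ the operator fixed by $\langle g, T_\sigma f \rangle = \omega_\sigma(a^*(f) a(g))$ for $f,g \in L^2(\bO)$; that this is a bona fide bounded operator is part of the standing setup, for instance $\| T_\sigma \| \le \omega_\sigma(N_\bO)$ when the particle number in $\bO$ is finite, since $a^*(f) a(f) \le \| f \|^2 N_\bO$ for $f$ supported in $\bO$, with $N_\bO$ the number operator of the region. In this language $\langle s, T_\sigma s \rangle = n_C(\sigma)$, and \eqref{e.2.7} is exactly the claim that $n_C(\sigma)^{-1} T_\sigma$ converges in operator norm to the rank one projection $P$ onto the ray $\CC s$, along the divergent subsequences on which $n_C(\sigma) \to \infty$.

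First I would use that $R(\bO)$ is closed, guaranteed by Definition II, to split $L^2(\bO) = \CC s \oplus R(\bO)$ orthogonally and decompose $T_\sigma$ into the associated blocks: the $(s,s)$-entry is the scalar $n_C(\sigma)$, the off-diagonal entry is $Q T_\sigma P$ with $Q = 1 - P$, and the regular block is $D_\sigma \doteq Q T_\sigma Q$. The argument then rests on two estimates. For the regular block, Definition I gives $\sup_\sigma \| D_\sigma^{1/2} f \|^2 = \sup_\sigma \omega_\sigma(a^*(f) a(f)) < \infty$ for every fixed $f \in R(\bO)$ (for all sufficiently large $\sigma$); since $R(\bO)$ is complete, the uniform boundedness principle applied to $\{ D_\sigma^{1/2} \}$ produces a constant $C$ with $\| D_\sigma \| \le C$ for all large $\sigma$. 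For the off-diagonal block, Cauchy--Schwarz for the positive form $\omega_\sigma(a^*(\cdot) a(\cdot))$ yields, for $f \in R(\bO)$,
\be
| \omega_\sigma(a^*(f) a(s)) |^2 \;\le\; \omega_\sigma(a^*(f) a(f)) \; \omega_\sigma(a^*(s) a(s)) \;\le\; C \, \| f \|^2 \, n_C(\sigma) \, ,
\ee
so that $\| Q T_\sigma P \| \le ( C \, n_C(\sigma) )^{1/2}$.

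Assembling the blocks — after renormalization the $(s,s)$-part vanishes identically, the off-diagonal self-adjoint part has norm $n_C(\sigma)^{-1} \| Q T_\sigma P \|$, and the regular part has norm $n_C(\sigma)^{-1} \| D_\sigma \|$ — one obtains
\be
\big\| \, n_C(\sigma)^{-1} T_\sigma - P \, \big\| \;\le\; \frac{C}{n_C(\sigma)} + \Big( \frac{C}{n_C(\sigma)} \Big)^{1/2} \, ,
\ee
which tends to $0$ whenever $n_C(\sigma) \to \infty$; since the sesquilinear form of $P$ is precisely the right-hand side of \eqref{e.2.7}, this is the assertion, and in fact no further thinning of a divergent subsequence is needed. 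I expect the one genuinely delicate point to be the uniform bound $\| D_\sigma \| \le C$: it is exactly here that the hypothesis that $R(\bO)$ be \emph{closed} — rather than merely that its closure have a one-dimensional orthogonal complement in $L^2(\bO)$ — enters, since it is what makes Banach--Steinhaus available to turn the pointwise boundedness supplied by Definition I into an operator-norm bound. Everything else is routine bookkeeping with Cauchy--Schwarz.
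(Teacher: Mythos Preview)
Your proof is correct and follows essentially the same route as the paper: both invoke the uniform boundedness principle on the closed subspace $R(\bO)$ to obtain a uniform bound $n_R$ (your $C$) on the regular part, then control the cross terms by Cauchy--Schwarz to get an error of order $(n_R\,n_C(\sigma))^{1/2}$. The only difference is packaging --- you phrase the argument as a block decomposition of the operator $T_\sigma$ and bound $\|n_C(\sigma)^{-1}T_\sigma - P\|$ directly, whereas the paper decomposes a general vector $f = f_\perp + \langle s,f\rangle s$ and estimates the quadratic form $|\omega_\sigma(a^*(f)a(f)) - n_C(\sigma)|\langle s,f\rangle|^2|$ --- but the underlying estimates are identical.
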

\begin{proof}
  Since $R(\bO)$ is closed, it follows from the uniform boundedness
  principle~\cite{Yo} that there is some constant $n_R$ such that,
uniformly with regard to~$\sigma$, 
\be
\omega_\sigma(a^*(f)a(f)) \leq n_R
\, \| f \|^2 \, , \quad f \in R(\bO) \, . 
\ee
Now, decomposing
\be
f = f_\perp + \langle s , f \rangle  s \, ,
\quad f \in L^2(\bO) \, ,
\ee
whence $f_\perp \in R(\bO)$, one obtains 
\begin{align} \label{e.2.10}
| \omega_\sigma(a^*(f)a(f)) & - n_C(\sigma) | \langle f, s \rangle |^2 | \nonumber \\
& \leq n_R \, \| f_\perp \|^2 +
2(n_R n_C(\sigma))^{1/2} \, \| f_\perp \| \, | \langle s, f \rangle |
\nonumber \\
& \leq \big(n_R + 2(n_R n_C(\sigma))^{1/2} \big) \, \| f \|^2 \, .
\end{align}
Thus, picking any subsequence of $\sigma$ for which $n_C(\sigma)$
diverges, the corresponding sequence of renormalized one-particle density
matrices converges, as stated. \end{proof}  

\medskip
The preceding lemma shows that  
the existence of proper condensates in $\bO$ 
manifests itself in a clear-cut manner already 
in the approximating sequence of Fock-states.
We emphasize that this does not necessarily
imply that the number of particles
in these states with wave function $s$
is of macroscopic order, 
\ie it need not be proportional
to the total (expected) number of particles in $\bO$. It merely
must exceed for large $\sigma$ the maximal possible number $n_R$ of
particles occupying some regular wave function. 
As is shown in the subsequent section,
this feature already implies the existence of ODLRO in locally
homogeneous states. 

\section{Off-diagonal long range order}
\label{sec3}
\setcounter{equation}{0} 

The appearance of ODLRO in states containing 
condensates is commonly proven by proceeding to
the thermodynamic limit of Gibbs-von Neumann ensembles.
This is done either by unfolding a trapping
potential and adjusting the number of particles
or by considering systems with constant density 
in infinitely growing boxes. 
We will show here that ODLRO can also be established
in fixed bounded regions for sequences of Fock-states,
containing an increasing number of particles
which form a homogeneous proper condensate in the limit.
As we shall discuss in Sec.\ 4, this result covers the preceding
cases.

\medskip 
Proceeding to the details, we recall that the region $\bO$,
which was fixed above, is open and bounded. We will consider
open subregions
$\bO_0 \subset \bO$ whose closure is contained in $\bO$
and write in this case $\bO_0 \Subset \bO$.
Thus, for sufficiently small translations $\bx \in \RR^d$, 
one has also $\bO_0 + \bx \Subset \bO$. Denoting by $\bP$
the momentum operator on $L^2(\RR^d)$, this implies 
$e^{i \sbx \sbP} L^2(\bO_0) \subset L^2(\bO)$ for such
translations. 

\medskip \noindent
\textbf{Definition III:} \ Let $\omega_\sigma$, $\sigma > 0$, be a
sequence of states as in Lemma~2.1. This sequence 
describes a \textit{homogeneous} proper condensate in $\bO$
in the limit if 
\be
\lim_\sigma \, n_C(\sigma)^{-1} 
\omega_\sigma(a^*(e^{i \sbx \sbP} \! f) a(e^{i \sbx \sbP} \! g))
= \langle s, f \rangle \langle g, s \rangle 
\ee
for all $f,g \in L^2(\bO_0)$, $\bO_0 \Subset \bO$, and 
translations $\bx \in \RR^d$ such that  $\bO_0 + \bx \Subset \bO$.

\medskip 
It is apparent from Definition I  that the regular functions
$R(\bO_0)$ with support in $\bO_0 \Subset \bO$ are
contained in $R(\bO)$. As a matter of fact, the assignment
$\bO_0 \mapsto R(\bO_0)$ defines a net on
the subsets $\bO_0 \Subset \bO$. It follows from Definition~III
that nets resulting from the corresponding states are also stable under small
translations. For, the orthogonal
complement of $R(\bO)$ in~$L^2(\bO)$ coincides with the ray of $s$,
which implies for sufficiently small $\bx$ 
\be  \label{e.3.2}
| \langle s, e^{i \sbx \sbP} f \rangle |^2 =
| \langle s, f \rangle |^2 = 0 \, , \quad f \in R(\bO_0) \, .
\ee
Thus, $ e^{i \sbx \sbP}   R(\bO_0) \subset \{ s \}^\perp \bigcap L^2(\bO_1)
= R(\bO_1)$, provided $\bO_0 + \bx \subset \bO_1 \Subset \bO$.
With this information, we can determine now the possible 
form of the singular function $s$. 

\begin{lemma} \label{l.3.1}
  Let $\omega_\sigma$, $\sigma > 0$, be a sequence of states that 
  describes a homogeneous proper condensate in $\bO$
  in the limit. The corresponding
  singular wave function $s$ has the form
  \be
  \bx \mapsto s(\bx) =
  \begin{cases} \label{e.3.3}
    |\bO|^{-1/2} \, e^{i \sbx \sbp}  &  \text{if} \quad \bx \in \bO \\
    0   & \text{if} \quad  \bx \in \RR^d \backslash \bO \, ,  
  \end{cases}
    \ee 
    where $\bp \in \RR^d$ and $|\bO|$ is the volume of $\bO$. 
  \end{lemma}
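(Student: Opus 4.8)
The plan is to use the homogeneity hypothesis to turn $s$ into an eigenfunction of the small translations of $\bO$, and then to integrate that relation.

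\emph{Step 1 (a translation functional equation for $s$).} I would fix an open $\bO_0 \Subset \bO$ and a vector $\bx \in \RR^d$ with $\bO_0 + \bx \Subset \bO$, so that $e^{i\bx\bP}f \in L^2(\bO)$ for every $f \in L^2(\bO_0)$. Along any subsequence of $\sigma$ for which $n_C(\sigma)$ diverges, Lemma~2.1 applied to the vector $e^{i\bx\bP}f \in L^2(\bO)$ gives $n_C(\sigma)^{-1}\omega_\sigma(a^*(e^{i\bx\bP}f)\,a(e^{i\bx\bP}f)) \to |\langle s, e^{i\bx\bP}f\rangle|^2$, while Definition~III says that the very same sequence tends to $|\langle s, f\rangle|^2$; hence, since $e^{i\bx\bP}$ is unitary,
\[
|\langle e^{-i\bx\bP}s,\, f\rangle| \;=\; |\langle s, f\rangle| \qquad \text{for all } f \in L^2(\bO_0).
\]
Two applications of the Cauchy--Schwarz inequality (taking $f$ first to be each of the two vectors, which forces them to have equal norm, and then using the equality case) show that the restriction of $e^{-i\bx\bP}s$ to $\bO_0$, namely $\by \mapsto s(\by+\bx)$, is a unimodular multiple of the restriction of $s$ to $\bO_0$ — provided the latter is not a.e.\ zero. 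That degenerate case cannot arise: if $s$ vanished a.e.\ on some nonempty open subset of $\bO$, the displayed relation would propagate this to all small translates of that subset, and connectedness of $\bO$ would give $s\equiv 0$, contradicting $\|s\|=1$. So there is a $\chi(\bx)\in\TT$ (independent of the auxiliary $\bO_0$, since $s$ is nowhere locally zero) with
\[
s(\by+\bx) = \chi(\bx)\,s(\by) \qquad \text{for a.e.\ } \by\in\bO_0 .
\]

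\emph{Step 2 (identifying $\chi$ and integrating).} Strong continuity of the translations makes $\bx\mapsto e^{-i\bx\bP}s$ norm-continuous, so $\chi$ is continuous near $0$; composing two small translations in the functional equation (after shrinking $\bO_0$ so that the intermediate translate stays inside it) yields the cocycle identity $\chi(\bx+\bx')=\chi(\bx)\chi(\bx')$. Thus $\chi$ is a continuous local homomorphism of a neighbourhood of the origin in $\RR^d$ into $\TT$, and by the standard description of the continuous characters of $\RR^d$ it must be of the form $\chi(\bx)=e^{i\bx\bp}$ for a unique $\bp\in\RR^d$. Substituting this back, the function $h(\by):=e^{-i\by\bp}\,s(\by)$ satisfies $h(\by+\bx)=h(\by)$ for a.e.\ $\by$ and all small $\bx$; a routine mollification argument then makes $h$ a.e.\ constant on every small ball, and patching these constants across overlapping balls (using that $\bO$ is a connected region and $s$ is nowhere locally zero) gives $h\equiv c$ a.e.\ on $\bO$. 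Hence $s(\by)=c\,e^{i\by\bp}$ for a.e.\ $\by\in\bO$ and $s$ vanishes outside $\bO$ by definition of $L^2(\bO)$; normalisation forces $|c|^2|\bO|=\int_\bO|s|^2=1$, and since $s$ is only determined up to a phase one may take $c=|\bO|^{-1/2}$, which is the asserted form.

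\emph{Expected main obstacle.} No single step is deep; the effort goes into the bookkeeping of domains and null sets when composing translations, into the measure-theoretic passage from the relations that hold for each $\bx$ and almost every $\by$ to a pointwise conclusion, into the use of the fact that a continuous local homomorphism of $\RR^d$ into $\TT$ is a genuine character, and into the (harmless) structural assumption that the region $\bO$ is connected — without it one can only conclude that $s$ is a plane wave $(\text{const})\,e^{i\by\bp}$ on each connected component of $\bO$ separately, with constants and momenta that need not coincide.
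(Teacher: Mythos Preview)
Your argument is correct and reaches the same conclusion as the paper, but by a genuinely different mechanism. The paper works through the nested structure \(\bO_0\Subset\bO_1\Subset\bO\): it decomposes \(e^{i\bx\bP}f\) along the singular direction \(s_1\) of the intermediate region, uses the translation-stability of the regular subspaces established just before the lemma to kill the regular remainder, and obtains the multiplicative factorization \(\langle s,e^{i(\bx+\by)\bP}f\rangle=c(\by)\,\langle s,e^{i\bx\bP}f\rangle\); continuity then forces \(c(\by)=e^{\by\bk}\) with \(\bk\in\CC^d\), and only at the end is the modulus identity~\eqref{e.3.2} invoked to make \(\bk\) purely imaginary. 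You instead start directly from the modulus identity in its full strength (for arbitrary \(f\in L^2(\bO_0)\), not just regular ones), and convert it via the equality case of Cauchy--Schwarz into the pointwise relation \(s(\,\cdot+\bx)=\chi(\bx)\,s(\,\cdot\,)\) on \(\bO_0\), with \(\chi\) already unimodular; the exponent is then real from the outset, and no nested regions or regular/singular decomposition are needed. Your route is more elementary and makes the connectedness hypothesis and the ``\(s\) is nowhere locally zero'' step explicit (the paper hides both behind the word ``region'' and a ``moment's reflection''); the paper's route is shorter on the page and ties the result more visibly to the regular/singular framework of Definition~II. Both proofs ultimately need the same local-character identification \(\chi(\bx)=e^{i\bx\bp}\) and the same patching over \(\bO\).
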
  

\begin{proof}
  Given $\bO_0 \Subset \bO$, the orthogonal complement
  of $R(\bO_0)$ in $L^2(\bO_0)$ is the ray of $s_0$,
  which coincides with
  the normalized restriction $s \upharpoonright \bO_0$.
  (Since $\bO_0$ is open, it follows after a moments reflection
  that this restriction is different from $0$.)
  Thus  $f - \langle s_0, f \rangle s_0 \in R(\bO_0)$
  for any $f \in L^2(\bO_0)$. We choose now regions
  $\bO_0 \Subset \bO_1 \Subset \bO$ with corresponding
  singular functions $s_0$ and $s_1$. Then, 
  for sufficiently small $\bx, \by$ such that
  $\bO_0 + \bx \Subset \bO_1$ and $\bO_1 + \by \Subset \bO$, we have 
  \begin{align}
  \langle s, e^{i(\sbx + \sby) \sbP} f \rangle
  & = \langle s, e^{i \sby \sbP} s_1 \rangle \nonumber 
  \langle s_1, e^{i \sbx \sbP} f \rangle   \nonumber \\
  & = \frac{\langle s, e^{i \sby \sbP} s_1 \rangle}{\langle s, s_1 \rangle}
 \, \langle s, e^{i \sbx \sbP} f \rangle \, , \quad f \in L^2(\bO_0) \, .
  \end{align}
  Since the matrix elements of the unitary translation operators are
  continuous, it follows from this equality
  that there is some $\bk \in \CC^d$ such that for small $\bx$ 
  \be
  \bx \mapsto  \langle s, e^{i \sbx \sbP} f \rangle
  =  \langle s, f \rangle \, e^{\sbx \sbk } \, ,
   \quad f \in L^2(\bO_0) \, .
   \ee
   Equation \eqref{e.3.2} then implies that $\bk = i \bp$ for
   some $\bp \in \RR^d$. Since the region $\bO_0 \Subset \bO$
   can be arbitrarily chosen, this completes the proof. 
\end{proof}  

The proof that sequences of states, describing a homogeneous
proper condensate in $\bO$ in the limit, exhibit ODLRO is now straightforward. 
To fix ideas, we choose a length $L_0$ of 
microscopic (\eg atomic) size and consider balls 
\mbox{$\bO_0 \Subset \bO$}, centered at
the origin of $\RR^d$, with diameters $L_0$ and $L \gg L_0$, respectively.
Picking any normalized function  $f \in L^2(\bO_0)$, we make use
of Lemma~\ref{l.3.1}, which yields 
$|\langle s, e^{i \sbx \sbP} f \rangle| \leq (|\bO_0|/|\bO|)^{1/2}$ for
$|\bx| < (L-L_0)/2$. 
Relation~\eqref{e.2.10} then implies 
for condensate densities $n_C(\sigma)/|\bO| > n_R/|\bO_0|$ that 
\be \label{e.3.6}
\omega_\sigma(a^*(e^{i \sbx \sbP} f) a(e^{i \sby \sbP} f))
= (n_C(\sigma)/|\bO|) \, e^{i(\sbx - \sby) \sbp} \, (2 \pi)^s
|\widetilde{f}(\bp)|^2 \, ,
\ee
disregarding contributions of order
$(n_R n_C(\sigma) |\bO_0| / |\bO|)^{1/2}$; 
the tilde~$\, \widetilde{ } \,$ denotes Fouri\-er transforms.
Thus, for sufficiently large
condensate densities $n_C(\sigma)/|\bO|$, the correlations
between almost all particles in the state have
in leading order constant, non-vanishing amplitudes
at distances up to $(L-L_0)$. Hence ODLRO prevails
in the approximating Fock-states. 

\medskip 
Thinking of interference experiments, 
it is also of interest to determine the Fourier transforms of
the correlation functions. To this end we consider the
localized plane waves 
$\bx \mapsto e_\sbk(\bx)$, $\bk \in \RR^d$,  which are
defined as in equation~\eqref{e.3.3}, putting $\bp = \bk$. 
Relations \eqref{e.2.7} and
\eqref{e.2.10} then imply that for $n_C(\sigma) > n_R$
\be \label{e.3.7} 
\omega_\sigma(a^*(e_\sbk)a(e_\sbk)) = n_C(\sigma) \, 
s^2 (L |\bk - \bp|/2)^{-2s}
\Big( \int_0^{L |\sbk - \sbp|/2} \! dr \, r^{s-2} \sin(r) \Big)^2 \, ,
\ee
disregarding contributions of order $(n_R n_C(\sigma))^{1/2}$. 
Thus, for $\bk = \bp$, the corresponding number of particles
coincides with the number $n_C(\sigma)$ of
particles in the condensate. If 
$L |\bk - \bp| > 2$, a straightforward estimate shows that the
corresponding number is smaller than 
$4 n_C(\sigma) (L |\bk - \bp| )^{-2}$.
Given the magnitude of $L$, it follows that the momentum distribution of the
particles in $\bO$ has a pronounced peak at the momentum $\bp$ of
the  particles in the condensate. We summarize these results in the
following theorem. 

\begin{theorem}
  Let $\bO_0 \Subset \bO \subset \RR^d$ be
  concentric balls of diameter $L_0$ and $L$, respectively, and   
  let $\omega_\sigma$, $\sigma > 0$, be a sequence of Fock-states that 
  describes a homogeneous proper condensate in~$\bO$ in the
  limit. For values of $\sigma$ such that
  $n_C(\sigma)/|\bO| > n_R/|\bO_0| $, there appear in $\omega_\sigma$
  undamped correlations between particles localized in distant 
  balls $\bO_0 + \bx, \bO_0 + \by  \Subset \bO$ (ODLRO)
  at distances $|\bx - \by| < L - L_0$, cf.\ equation~\eqref{e.3.6}.
  If $n_C(\sigma) > n_R$, the momentum distribution of the 
  particles in~$\bO$ in the state $\omega_\sigma$
  has a peak at the joint momentum of the
  particles forming the condensate, cf.\ equation~\eqref{e.3.7}.
\end{theorem}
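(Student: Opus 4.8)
The plan is to assemble the theorem from the three ingredients already in hand: Lemma~2.1 (the renormalized density matrix converges in norm to the rank-one projection onto $s$), Lemma~\ref{l.3.1} (the singular function is a localized plane wave $|\bO|^{-1/2} e^{i \sbx \sbp}$ on $\bO$), and the quantitative error bound~\eqref{e.2.10}. The geometric preamble about concentric balls $\bO_0 \Subset \bO$ and the observation $e^{i \sbx \sbP} L^2(\bO_0) \subset L^2(\bO_1) \subset L^2(\bO)$ for $|\bx|$ small is what lets us feed translated test functions into the single-particle density matrix while staying inside $L^2(\bO)$, so that all three ingredients apply.

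First I would treat the ODLRO claim. Take a normalized $f \in L^2(\bO_0)$ and apply~\eqref{e.2.10} with $f$ replaced by $e^{i \sbx \sbP} f$ and by $e^{i \sby \sbP} f$, plus a polarization identity (or the bilinear form of Lemma~2.1) to handle the off-diagonal entry $\omega_\sigma(a^*(e^{i \sbx \sbP} f) a(e^{i \sby \sbP} f))$. The leading term is $n_C(\sigma) \langle s, e^{i \sbx \sbP} f \rangle \langle e^{i \sby \sbP} f, s \rangle$; using Lemma~\ref{l.3.1} and the explicit form of $s$, each inner product equals $|\bO|^{-1/2} e^{-i \sbx \sbp} (2\pi)^{s/2} \widetilde{f}(\bp)$ up to conjugation, giving the product $(n_C(\sigma)/|\bO|) e^{i(\sbx - \sby)\sbp} (2\pi)^s |\widetilde f(\bp)|^2$, which is~\eqref{e.3.6}. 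The error terms from~\eqref{e.2.10} are of order $(n_R n_C(\sigma))^{1/2} \| e^{i \sbx \sbP} f \| \cdot |\langle s, e^{i \sbx \sbP} f\rangle|$; since $|\langle s, e^{i \sbx \sbP} f \rangle| \le (|\bO_0|/|\bO|)^{1/2}$ (the Cauchy--Schwarz bound using that $e^{i \sbx \sbP} f$ is supported in a translate of $\bO_0$ and $|s| = |\bO|^{-1/2}$ on $\bO$), this is $O((n_R n_C(\sigma) |\bO_0|/|\bO|)^{1/2})$, negligible against the leading $O(n_C(\sigma))$ once $n_C(\sigma)/|\bO| > n_R/|\bO_0|$. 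That inequality is exactly the stated hypothesis, and it guarantees the leading amplitude genuinely dominates — this quantitative comparison is the one point that needs care, since it is what distinguishes "a condensate forms" from "ODLRO is visible at finite $\sigma$."

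Next I would handle the momentum-distribution claim. Apply~\eqref{e.2.7} (or~\eqref{e.2.10}) to the localized plane wave $e_\sbk$, so $\omega_\sigma(a^*(e_\sbk) a(e_\sbk)) = n_C(\sigma) |\langle s, e_\sbk \rangle|^2$ up to $O((n_R n_C(\sigma))^{1/2})$. The overlap $\langle s, e_\sbk \rangle$ is a purely geometric integral over the ball $\bO$ of $|\bO|^{-1} e^{i \sbx(\sbk - \sbp)}$, which evaluates (passing to polar coordinates, the angular integral producing the radial Bessel-type kernel) to the expression displayed in~\eqref{e.3.7} with its $(L|\bk-\bp|/2)^{-2s} \big(\int_0^{L|\sbk-\sbp|/2} dr\, r^{s-2}\sin r\big)^2$ form. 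At $\bk = \bp$ this overlap is $1$, giving exactly $n_C(\sigma)$. For $L|\bk - \bp| > 2$ one bounds the integral crudely — the integrand is at most $r^{s-2}\min(r,1) \cdot$ const in absolute value, but the cleanest route is to integrate by parts once to extract a factor $(L|\bk-\bp|/2)^{-1}$ and bound the remainder — to get the $4 n_C(\sigma)(L|\bk-\bp|)^{-2}$ estimate. Since $L$ is macroscopically large, this shows the distribution is sharply peaked at $\bp$.

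The main obstacle is not any single estimate but keeping the bookkeeping of "disregarded" lower-order terms honest: one must verify that after translation the relevant test functions still lie in $L^2(\bO)$ (hence $f_\perp \in R(\bO)$ and $n_R$ applies), that the cross terms in the polarization identity for the off-diagonal element inherit the same $O((n_R n_C)^{1/2})$ control, and that in~\eqref{e.3.7} the subleading correction really is $O((n_R n_C(\sigma))^{1/2})$ uniformly in $\bk$. None of these is deep, but the theorem is essentially a careful repackaging of~\eqref{e.2.10} and Lemma~\ref{l.3.1}, so the proof will amount to citing those, inserting the explicit $s$, doing the two geometric integrals over the ball, and checking the stated thresholds on $n_C(\sigma)$ make the leading terms dominant.
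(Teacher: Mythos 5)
Your proposal follows essentially the same route as the paper: the theorem is proved there by combining the error bound~\eqref{e.2.10} with the explicit form of $s$ from Lemma~\ref{l.3.1}, using the Cauchy--Schwarz bound $|\langle s, e^{i \sbx \sbP} f\rangle| \leq (|\bO_0|/|\bO|)^{1/2}$ and the threshold $n_C(\sigma)/|\bO| > n_R/|\bO_0|$ to make the rank-one term dominate, and then evaluating the geometric overlap of $s$ with the localized plane waves $e_\sbk$ for the momentum distribution. Your identification of the quantitative comparison of leading versus disregarded terms as the one point needing care matches the paper's emphasis exactly.
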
  

We emphasize, that these characteristic properties
of condensates appear locally in Fock-states with a finite
particle number, provided the density of the
local condensate complies with the given constraints.

\section{Conclusions}
\label{sec4}
\setcounter{equation}{0} 

In the present article we have established the existence of long range
correlations in bosonic
systems of a limited number of particles in a bounded region. The only 
input used was the assumption that the systems can in principle be enlarged
in a manner that leads to the formation of homogeneous proper 
condensates. That is, there exists a closed subspace of regular wave
functions of co-dimension one, having support in the region, whose
occupation numbers remain finite, whereas their one-dimensional
orthogonal complement can be occupied by an unlimited
number of particles. 

\medskip
This input comprises the qualitative features of
condensation phenomena, found in experiments with trapped Bose gases. 
We did not need to assume that the systems are in
equilibrium or to specify any dynamics. It was also not necessary
to assume that the condensate wave functions are macroscopically
occupied, \ie that the corresponding number
of particles is of the same order of magnitude as the total
number of particles. It only matters that the number
of particles in the condensate substantially exceeds the maximal possible
number of particles occupying some regular wave function.
If this is given, the existence of the condensate manifests
itself in a pronounced peak of the momentum distribution of the
particles in the state, which is localized at the common momentum of the
particles in the condensate.

\medskip
Our arguments are based only on kinematic properties 
of systems of Bosons. Yet 
the question of whether proper condensates appear depends~of
course on the dynamics. Simple examples are equilibrium 
states of non-interacting
Bosons in a fixed box in
any number of dimensions $d$, where homogeneous proper condensates are formed in
the limit of infinite particle numbers. More interesting 
are systems of non-interacting Bosons, which are confined by
some smooth trapping potential. There one must  
simultaneously increase the number of particles and unfold 
the trapping potential. The resulting states are in general not
homogeneous, providing examples where the spatial translations are
spontaneously broken in the limit. Nevertheless, homogeneous proper
condensates appear for large particle numbers. They occupy increasing
neighborhoods of the minimum of the trapping potential, cf.\ for
example \cite{BaBu,Bu1}. 

\medskip
The proof that homogeneous proper condensates exist in interacting
systems is more difficult. Such systems are frequently analyzed by relying on
approximations of mean field type. There the existence of proper condensates 
can be extracted from the literature,
cf.\ for example \cite{Le} and references quoted there. A major challenge,
however, is a proof in case of genuine two-body interactions. 
In view of the present results, it
amounts to a comparison of the occupation numbers of the
largest two eigenvalues of localized one-particle density matrices;
it is not necessary to obtain control on the full spectrum. 
As a matter of fact, one 
can rely on the following criterion, where we
restrict our attention to the case  
of condensates having vanishing momentum,
such as in rotational invariant states.

\medskip \noindent
\textbf{Criterion:} Let $\omega_\sigma$, $\sigma > 0$, be a sequence of
Fock-states, let $\bO$ be an open bounded region, and let
${R}(\bO)$ be the (closed)
subspace of functions $f \in L^2(\bO)$ satisfying $\int \! d\bx \, f(\bx) = 0$.
The sequence describes in the limit a homogeneous proper 
condensate in $\bO$ with zero momentum if and only if 
\be
\limsup_\sigma \, \omega_\sigma(a^*(f) a(f)) < \infty \, ,
\quad f \in {R}(\bO) \, ,  
\ee
and, for some (hence any) function $s \in L^2(\bO)$ with 
$\int \! d\bx \, s(\bx) \neq 0$, one has 
\be
\limsup_\sigma \, \omega_\sigma(a^*(s) a(s)) = \infty \, .
\ee

\noindent
\textbf{Remark:} Making use of relation \eqref{e.2.10}, there 
are less stringent, quantitative versions of this criterion, which   
likewise entail the properties of the approximating Fock-states,
presented in Theorem~3.2. 

\medskip
To summarize, the concept of proper condensates
provides a meaningful picture of condensation 
and its features, even if phase transition points (\eg temperatures)
are not sharply defined. The concept also allows to analyze 
inhomogenities in the spatial structures of coexisting phases,
cf.~\cite{Bu1}.
Thus, our results provide a fresh look at
the longstanding problem to establish the 
existence of Bose-Einstein condensates in the presence
of genuine interactions. 

\bigskip  \noindent 
\textbf{\Large Acknowledgments}

\medskip \noindent
I gratefully acknowledge stimulating discussions with Jakob Yngvason
and the hospitality and financial support of the
Erwin Schr\"odinger Institute in Vienna. I am
also grateful to Dorothea Bahns and the Mathematics Institute
of the University of G\"ottingen for their continuing hospitality. 

\section*{Conflict of interest}
There are no relevant financial or non-financial
competing interests to disclose.

\section*{Data availability} 
Data sharing is not applicable to this article as no new
data were created or analyzed in this study.

\appendix
\section*{Appendix: Illustrative examples}
\setcounter{section}{1}
\setcounter{equation}{0}

In this appendix we present some examples of sequences of states 
that describe homogeneous proper condensates in the limit,
but do not comply with the Onsager-Penrose criterion for Bose-Einstein
condensation. These examples are merely of a mathematical nature.
But they show that our concept of proper condensation differs 
markedly from the Onsager-Penrose approach.
Even though our condition is less stringent, its implications
with regard to observable effects of condensation are quite similar. 
For example, there appear characteristic peaks in the momentum
distributions of particles in a gas containing such a condensate. 

\medskip
We fix in the following some open,  bounded region $\, \bO \subset \RR^s$.
Let \mbox{$e_k \in L^2(\bO)$}, $k \in \NN_0$, be an orthonormal basis,
where we choose for $e_0$ the constant function in $\bO$. 
Picking some $0 < \varepsilon < 1$, we define for
numbers $n > 1$ the quantities 
$n_C(n) \doteq  n^\varepsilon$ and 
$\varepsilon_n \doteq \ln((1 + n - n_c(n))/(n - n_C(n)) > 0$. It entails 
\be
n_c(n) + \sum_{k=1}^\infty e^{- \varepsilon_n k} =
n_c(n) + e^{- \varepsilon_n}(1 - e^{- \varepsilon_n})^{-1} = 
n \, . 
\ee
We then define for $n > 1$ a 
sequence of gauge invariant quasifree states $\omega_n$
on the algebra of canonical commutation relations, putting  
\be
\omega_n(a^*(e_k) a(e_l)) \doteq
\delta_{k,l} \,
\begin{cases}
  n_C(n) &  \text{if} \ \ k = 0 \\
  e^{- \varepsilon_n k} &  \text{if} \ \ k \geq 1 \, .
\end{cases}  
\ee
These states may be arbitrarily extended to the full algebra, for example
as product states on the given region $\bO$ and
its complement $\RR^s \backslash \bO$. It follows from this definition that 
\be
\omega_n(a^*(e_0)a(e_0)) = n_C(n) \, , \quad
\sum_{k=0}^\infty \omega_n(a^*(e_k)a(e_k)) = n \, .
\ee
The (mean) number of particles
in this sequence with wave function in
the condensate state $e_0$ increases
with $n$ like $n^\varepsilon$, whereas
the maximal occupation number of particles in $\bO$ with wave function 
in the orthogonal complement of $e_0$ is bounded by $1$. Since  
the total (mean) number of particles in $\bO$ in the states is
equal to $n$, the condensate is not macroscopically occupied. But
the sequence describes a homogeneous proper condensate
in $\bO$ in the limit with all of its consequences, discussed
in the main text. 

\medskip 
For given orthonormal basis, one can also define
Hamiltonians on $L^2(\bO)$ with corresponding eigenstates and
arbitrary discrete spectrum. The states $\omega_n$ are stationary
under the action of the corresponding dynamics. Given
$n > 1$ and a temperature $T$,
there are also Hamiltonians $H_n$ on $L^2(\bO)$ such that the
states $\omega_n$ satisfy the KMS condition for the corresponding
dynamics and given temperature. They act on the orthonormal basis
$e_k$, $k \in \NN_0$,  according to
\be
H_n \, e_k = T e_k 
\begin{cases}
  \ln(1 + n_C(n)^{-1}) & \text{if} \ \ k = 0 \\
  \ln (1 + e^{\, \varepsilon_n k}) &  \text{if} \ \ k \geq 1 \, .
\end{cases}
\ee
One can extend these Hamiltonians to wave functions with
arbitrary support by adding to them operators which commute and 
act trivially on the wave functions with support in
$\bO$. There exist also more refined examples of
this kind.

\end{document}